\numberwithin{equation}{section}
\theoremstyle{plain}
\newtheorem{theorem}{Theorem}[section]
\newtheorem{proposition}[theorem]{Proposition}
\theoremstyle{definition}
\theoremstyle{remark}
\journal{Statistics and Probability Letters}
\begin{document}
%=========================================================

\begin{frontmatter}

\title{Bayesian Recovery for Probabilistic Coalition Structures}

\author[inst1]{Angshul Majumdar}

\affiliation[inst1]{organization={Indraprastha Institute of Information Technology},
            addressline={Delhi}, 
            city={New Delhi},
            postcode={110020},
            country={India}}

\begin{abstract}
Probabilistic Coalition Structure Generation (PCSG) is NP--hard and can be recast as an $\ell_{0}$--type sparse recovery problem by representing coalition structures as sparse coefficient vectors over a coalition–incidence design. A natural question is whether standard sparse methods, such as $\ell_{1}$ relaxations and greedy pursuits, can reliably recover the optimal coalition structure in this setting. We show that the answer is negative in a PCSG-inspired regime where overlapping coalitions generate highly coherent, near-duplicate columns: the irrepresentable condition fails for the design, and $k$--step Orthogonal Matching Pursuit (OMP) exhibits a nonvanishing probability of irreversible mis-selection. In contrast, we prove that Sparse Bayesian Learning (SBL) with a Gaussian--Gamma hierarchy is support consistent under the same structural assumptions. The concave sparsity penalty induced by SBL suppresses spurious near-duplicates and recovers the true coalition support with probability tending to one. This establishes a rigorous separation between convex, greedy, and Bayesian sparse approaches for PCSG.
\end{abstract}

\begin{keyword}
Probabilistic coalition structure generation; Sparse recovery; Sparse Bayesian Learning;

\end{keyword}

\end{frontmatter}

%========================
% Main Content Starts
%========================
\section{Introduction}

Coalition structure generation (CSG) is a fundamental optimisation problem in cooperative game theory: given a set of agents and a valuation function over coalitions, the goal is to partition the agents into coalitions so as to maximise the total value of the resulting coalition structure. It is well known that, under the standard representation, optimal CSG is NP--hard and admits no polynomial--time approximation scheme unless $\mathrm{NP}=\mathrm{ZPP}$~\cite{rahwan2015csgsurvey}. The probabilistic variant, Probabilistic Coalition Structure Generation (PCSG), further complicates matters by introducing uncertainty in coalition values or agent availability, so that one seeks to maximise expected welfare rather than deterministic utility~\cite{lagniez2018pcsg}. In both settings, exhaustive search over all partitions is infeasible even for moderate numbers of agents.

A natural way to expose the combinatorial structure of PCSG is to embed it into a sparse optimisation framework. One may index all candidate coalitions and represent a coalition structure by a coefficient vector $w^\star$ whose nonzero entries indicate the coalitions that participate in the structure. The welfare of a structure can then be expressed (or locally approximated) as a linear functional of $w^\star$, yielding a sparse linear model of the form
\[
  y = A w^\star + \varepsilon,
\]
where $A$ is a coalition--incidence matrix, $y$ encodes observed or expected valuations, and $\varepsilon$ models noise or modelling error. In this formulation, optimal PCSG reduces to recovering the support of $w^\star$, which is an $\ell_0$--type combinatorial problem tightly linked to NP--hard subset selection.

Two algorithmic paradigms are the most obvious candidates for tackling this $\ell_0$ formulation. The first is to replace the $\ell_0$ constraint by an $\ell_1$ penalty, leading to the LASSO and related convex relaxations~\cite{tibshirani1996lasso}. In classical sparse regression with suitably incoherent designs, $\ell_1$ estimators enjoy sharp support recovery guarantees, provided an ``irrepresentable condition'' holds for the design matrix~\cite{zhao2006lassoIC}. The second paradigm is greedy pursuit: Orthogonal Matching Pursuit (OMP) builds the support iteratively by adding one index at a time, guided by correlations with the current residual~\cite{tropp2007omp}. Under random, nearly orthogonal designs, OMP also achieves provable recovery with high probability.

However, coalition incidence matrices induced by PCSG are far from incoherent. Extensive overlaps between coalitions yield highly correlated columns, including near--duplicate groups corresponding to structurally similar coalitions. In such regimes, the conditions underpinning the success of convex $\ell_1$ relaxations and greedy methods are typically violated: the irrepresentable condition fails, and OMP can be misled by small noise perturbations when several coalitions have almost identical correlation with the residual. As a consequence, neither LASSO--type methods nor OMP can be expected to recover the optimal coalition structure reliably in the high--coherence regime that is most characteristic of PCSG.

These observations motivate a move away from purely optimisation--based relaxations towards a fully probabilistic treatment of the sparse PCSG model. Sparse Bayesian Learning (SBL) provides such a framework by placing hierarchical Gaussian--Gamma priors over the coefficients, inducing an adaptive, concave sparsity penalty that is much closer in spirit to $\ell_0$ than to $\ell_1$~\cite{tipping2001sbl}. In this paper we adopt the SBL viewpoint for PCSG and show that, in a class of highly coherent, PCSG--inspired designs where both LASSO and OMP provably fail to achieve support consistency, the SBL maximum a posteriori estimator still recovers the true coalition structure with high probability. This yields a rigorous separation between Bayesian and non--Bayesian sparse methods in a setting that is both theoretically interesting and practically motivated by coalition formation under uncertainty.

\section{PCSG-inspired sparse regression model}
\label{sec:model}

We formalise here the sparse linear model that underlies our analysis and introduce the structural assumptions specific to PCSG. Let $m$ denote the number of observable valuation coordinates and let $p$ denote the number of candidate coalitions. We collect the contribution pattern of coalition $j$ in a column vector $a_j \in \mathbb{R}^m$ and form the coalition--incidence matrix
\[
  A = [a_1,\dots,a_p] \in \mathbb{R}^{m\times p},
\]
with columns normalised so that $\|a_j\|_2 = 1$ for all $j$. A coalition structure is encoded by a coefficient vector $w^\star \in \mathbb{R}^p$, whose support
\[
  S^\star := \{ j \in \{1,\dots,p\} : w^\star_j \neq 0 \}
\]
indexes the coalitions present in the structure. We assume $|S^\star| = k \ll p$. The observed (or estimated) valuation vector $y \in \mathbb{R}^m$ follows the linear--Gaussian model
\begin{equation}
  y = A w^\star + \varepsilon,
  \qquad \varepsilon \sim \mathcal{N}(0,\sigma^2 I_m),
  \label{eq:lin-model}
\end{equation}
with unknown noise variance $\sigma^2>0$. Recovering the coalition structure reduces to recovering $S^\star$ from $(A,y)$.

A distinctive feature of PCSG is the presence of many highly overlapping coalitions. This induces strong correlations between the corresponding columns of $A$. We capture this via \emph{near-duplicate coalition groups}. For each true index $j\in S^\star$ we assume there exists a nonempty set
\[
  G_j \subset \{1,\dots,p\}\setminus S^\star
\]
such that:
\begin{enumerate}
  \item (Within-group coherence) For all $\ell \in G_j$,
  \[
    \langle a_j, a_\ell \rangle \ge \rho_{\mathrm{in}},
  \]
  with some $\rho_{\mathrm{in}}\in(0,1)$.
  \item (Between-group separation) For all $\ell\in G_j$ and all $r\notin \{j\}\cup G_j$,
  \[
    |\langle a_\ell, a_r \rangle| \le \rho_{\mathrm{out}},
  \]
  where $\rho_{\mathrm{out}} < \rho_{\mathrm{in}}$ is fixed.
\end{enumerate}
We are interested in a regime where near-duplicates become increasingly similar, in the sense that $\rho_{\mathrm{in}}\to 1$ as $m$ grows, while $\rho_{\mathrm{out}}$ remains bounded away from $1$.

We work in a high-dimensional setting where $p=p_m$ and $k=k_m$ may grow with $m$, but sparsity is controlled:
\[
  p_m \to \infty,\qquad k_m \to \infty,\qquad k_m \log p_m = o(m).
\]
Signal strength is regulated by a beta-min condition: there exists a sequence $\beta_{\min}(m)>0$ such that
\[
  \min_{j\in S^\star} |w^\star_j| \ge \beta_{\min}(m),
  \qquad
  \beta_{\min}(m)\sqrt{\frac{m}{\log p_m}} \to \infty.
\]
We further assume a restricted eigenvalue condition on the true support: there exists $\kappa>0$ such that
\[
  \|A u\|_2^2 \;\ge\; \kappa \|u\|_2^2
  \quad\text{for all } u \in \mathbb{R}^p \text{ with } \mathrm{supp}(u)\subseteq S^\star.
\]
This allows strong correlations with near-duplicates while preventing exact multicollinearity among the true coalitions.

All three estimators studied in this paper---the LASSO, $k$–step OMP, and the SBL maximum a posteriori estimator—are applied to the common model~\eqref{eq:lin-model} under the structural assumptions above. Their respective support recovery properties in this PCSG-inspired regime are compared in the subsequent sections.

\section{Failure of $\ell_{1}$ and OMP in PCSG designs}
\label{sec:failure}

We now show that, under the PCSG--inspired design of Section~\ref{sec:model}, neither
(i) $\ell_{1}$--penalised least squares nor (ii) $k$–step Orthogonal Matching Pursuit (OMP) achieve support recovery consistency. The proofs rely directly on the near–duplicate coalition structure and require no assumptions beyond those stated earlier. Throughout, probabilities are taken with respect to the noise $\varepsilon$ in~\eqref{eq:lin-model}, while $A$ and $w^\star$ are fixed.

\subsection{$\ell_{1}$ fails: irrepresentable condition violation}

Recall that $\ell_{1}$ support recovery requires the ``irrepresentable condition'' on the design matrix~\cite{zhao2006lassoIC}. In our notation, this condition demands
\begin{equation}
\label{eq:irrep}
\big\|
A_{S^{\star c}}^\top A_{S^\star}
\left(A_{S^\star}^\top A_{S^\star}\right)^{-1}\!
\mathrm{sign}(w^\star_{S^\star})
\big\|_{\infty} < 1.
\end{equation}
We argue that \eqref{eq:irrep} cannot hold in the PCSG regime where
$\rho_{\mathrm{in}}\to 1$.

\begin{proposition}
\label{prop:l1fail}
Assume the near–duplicate structure of Section~\ref{sec:model} and $\rho_{\mathrm{in}}\to 1$ as $m\to\infty$, while $\rho_{\mathrm{out}}<1$ stays fixed. Then, for all sufficiently large $m$, condition~\eqref{eq:irrep} fails; in particular,
\[
\inf_{\lambda>0}\;
\mathbb{P}\!\left(\mathrm{supp}(\hat w^{\ell_{1}})=S^\star\right)
\;\le\; 1-c
\]
for some constant $c>0$ independent of $m$.
\end{proposition}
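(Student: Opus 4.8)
The plan is to locate a single off-support near-duplicate coordinate at which the irrepresentable vector is driven to the boundary, and then to convert this into a constant lower bound on the LASSO failure probability through the primal--dual witness characterisation of support recovery. Fix any $j\in S^\star$ and any $\ell\in G_j$, and abbreviate $G:=A_{S^\star}^\top A_{S^\star}$ and $s:=\mathrm{sign}(w^\star_{S^\star})\in\{-1,+1\}^k$. First I would decompose $a_\ell=\alpha\,a_j+\sqrt{1-\alpha^2}\,u$, where $\alpha:=\langle a_\ell,a_j\rangle\ge\rho_{\mathrm{in}}$, $u\perp a_j$ and $\|u\|_2=1$. The $\ell$-th entry of the vector in \eqref{eq:irrep} is
\[
  \mu_\ell := a_\ell^\top A_{S^\star}G^{-1}s
  = \alpha\,\bigl(a_j^\top A_{S^\star}G^{-1}s\bigr)
  + \sqrt{1-\alpha^2}\,\bigl(u^\top A_{S^\star}G^{-1}s\bigr).
\]
Since $a_j^\top A_{S^\star}$ is the $j$-th row of $G$, one has $a_j^\top A_{S^\star}G^{-1}=e_j^\top$, so the first term equals $\alpha\,\mathrm{sign}(w^\star_j)$, whose magnitude is exactly $\alpha\ge\rho_{\mathrm{in}}$.

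The next step is to show the second term is controlled. Using the restricted eigenvalue assumption, $\lambda_{\min}(G)\ge\kappa$, whence $\|A_{S^\star}G^{-1}s\|_2^2=s^\top G^{-1}s\le\|s\|_2^2/\kappa=k/\kappa$. Cauchy--Schwarz then gives $\bigl|\sqrt{1-\alpha^2}\,u^\top A_{S^\star}G^{-1}s\bigr|\le\sqrt{(1-\rho_{\mathrm{in}}^2)\,k/\kappa}$, so that
\[
  |\mu_\ell| \;\ge\; \rho_{\mathrm{in}}-\sqrt{(1-\rho_{\mathrm{in}}^2)\,k/\kappa},
\]
and in particular $|\mu_\ell|=1$ at exact duplication $a_\ell=a_j$. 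Thus the left-hand side of \eqref{eq:irrep} carries no slack bounded away from $1$ as $\rho_{\mathrm{in}}\to1$; moreover, in admissible configurations where the true columns are correlated and the residual term is sign-aligned with $\alpha\,\mathrm{sign}(w^\star_j)$, the same decomposition yields $|\mu_\ell|=\alpha+\sqrt{1-\alpha^2}\,|u^\top A_{S^\star}G^{-1}s|>1$, so that \eqref{eq:irrep} fails outright for all large $m$. This establishes the first assertion.

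For the probability bound I would invoke the primal--dual witness construction. If the LASSO recovered $S^\star$ at regularisation level $\lambda$, dual feasibility at the inactive index $\ell$ would force $|z_\ell|\le1$, where $z_\ell=\mu_\ell+a_\ell^\top\Pi^\perp_{S^\star}\varepsilon/\lambda$ and $\Pi^\perp_{S^\star}$ projects onto the orthogonal complement of the column span of $A_{S^\star}$; the deterministic part is exactly $\mu_\ell$ because $\Pi^\perp_{S^\star}Aw^\star=0$, and the stochastic part is a centred Gaussian with variance $\sigma^2\|\Pi^\perp_{S^\star}a_\ell\|_2^2/\lambda^2$. When $|\mu_\ell|\ge1$ (the sign-aligned or degenerate configurations above), symmetry of the Gaussian fluctuation gives $\mathbb{P}(|z_\ell|>1)\ge 1/2-o(1)$, and on that event no LASSO solution can have support $S^\star$. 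As the argument applies to every $\lambda>0$, this yields $\inf_{\lambda>0}\mathbb{P}(\mathrm{supp}(\hat w^{\ell_1})=S^\star)\le1-c$ with $c>0$ independent of $m$.

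The hard part will be the final step: upgrading ``$|\mu_\ell|\to1$'' to a genuinely non-vanishing failure probability uniformly in $\lambda$. In benign sign configurations $\mu_\ell$ approaches $1$ only from below, while the standard deviation of $z_\ell$, proportional to $\|\Pi^\perp_{S^\star}a_\ell\|_2\le\sqrt{1-\rho_{\mathrm{in}}^2}$, shrinks at the same time; the decisive quantity is the ratio $(1-|\mu_\ell|)/\|\Pi^\perp_{S^\star}a_\ell\|_2$, which for a well-chosen $\lambda$ can be large enough that the noise rarely crosses the boundary. The robust route I would adopt is therefore to argue at (or arbitrarily close to) the exact-duplicate degeneracy $\langle a_j,a_\ell\rangle\to1$, where $\mu_\ell$ reaches $\pm1$ and the dual constraint at $\ell$ becomes tight, so that the symmetric Gaussian fluctuation alone produces a violation with probability tending to $1/2$, giving the stated constant $c$.
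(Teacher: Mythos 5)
Your overall strategy is the same as the paper's: isolate the entry of the irrepresentable vector at a near-duplicate index $\ell\in G_j$, show it is driven to the boundary as $\rho_{\mathrm{in}}\to 1$, and then convert this into a failure probability that is uniform in $\lambda$. Your execution is in fact sharper than the paper's: the identity $a_j^\top A_{S^\star}G^{-1}=e_j^\top$ pins the leading term down to exactly $\alpha\,\mathrm{sign}(w^\star_j)$, and your bound $|\mu_\ell|\ge \rho_{\mathrm{in}}-\sqrt{(1-\rho_{\mathrm{in}}^2)\,k/\kappa}$ replaces the paper's unexplained lower bound $\rho_{\mathrm{in}}c_0$ with a constant ``$c_0$ that exceeds $1$''; likewise your primal--dual witness computation $z_\ell=\mu_\ell+a_\ell^\top\Pi^\perp_{S^\star}\varepsilon/\lambda$ makes explicit what the paper dispatches with a citation to ``classical theory''. (One caveat even here: since $k=k_m\to\infty$ in Section~\ref{sec:model}, your error term $\sqrt{(1-\rho_{\mathrm{in}}^2)\,k/\kappa}$ tends to zero only under a rate condition $1-\rho_{\mathrm{in}}=o(1/k_m)$ that neither you nor the paper states.)

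The genuine gap is in your final step, and you have diagnosed it correctly yourself: under the stated assumptions the worst case is $u\perp\mathrm{span}(A_{S^\star})$, in which $\mu_\ell=\alpha\,\mathrm{sign}(w^\star_j)$ approaches $1$ strictly from below, so the strict inequality~\eqref{eq:irrep} is never literally violated and the proposition does not follow from the assumptions of Section~\ref{sec:model} alone. Of your two patches, only one works. The sign-aligned configuration is sound: if $u^\top A_{S^\star}G^{-1}s$ has a fixed nonzero component aligned with $\mathrm{sign}(w^\star_j)$, then $|\mu_\ell|=\alpha+\sqrt{1-\alpha^2}\,\delta>1$ for $\alpha$ near $1$, and because $\mu_\ell$ does not depend on $\lambda$, the witness is violated with probability at least $1/2$ for every $\lambda$; but this is an extra design hypothesis beyond Section~\ref{sec:model} --- precisely the hypothesis the paper smuggles in via the unjustified claim $c_0>1$. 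The exact-duplicate route, which you call the robust one, fails: at $a_\ell=a_j$ one has $\Pi^\perp_{S^\star}a_\ell=0$, so the Gaussian fluctuation you rely on is degenerate; then $z_\ell=\pm 1$ deterministically, dual feasibility holds (non-strictly), and a LASSO solution supported on $S^\star$ still exists --- recovery fails only in the weaker sense of non-uniqueness, which is not the event in the proposition. Near (but not at) duplication the fluctuation returns, but so does the slack $1-|\mu_\ell|>0$, and the slack-to-noise ratio can be made large by taking $\lambda$ large, which is exactly the trade-off you flagged and never resolve. So your proposal, like the paper's proof, establishes the claim only under an unstated strengthening of the near-duplicate assumption; the difference is that your write-up makes the missing hypothesis visible, whereas the paper hides it.
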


\begin{proof}
Fix $j\in S^\star$ and pick any $\ell\in G_j$. By construction,
$\langle a_j,a_\ell\rangle\ge \rho_{\mathrm{in}}$ and $\rho_{\mathrm{in}}\to 1$. Consider the $\ell$–th component of the vector inside the norm in~\eqref{eq:irrep}. Since $(A_{S^\star}^\top A_{S^\star})^{-1}$ is uniformly bounded by the restricted eigenvalue condition, we have
\[
\big|
a_\ell^\top A_{S^\star}
(A_{S^\star}^\top A_{S^\star})^{-1}
\mathrm{sign}(w^\star_{S^\star})
\big|
\;\ge\; \rho_{\mathrm{in}}\,c_0
\]
for some constant $c_0>0$ that depends only on $\kappa$ and $k$. As $\rho_{\mathrm{in}}\to 1$, the quantity above approaches $c_0$, which exceeds $1$ for sufficiently large $m$ (since $c_0$ is fixed and $k$ is not vanishing). Thus \eqref{eq:irrep} is violated. The classical theory of $\ell_1$ support recovery implies that violation of~\eqref{eq:irrep} yields a nonvanishing probability of selecting at least one index in $S^{\star c}$ for every $\lambda>0$, which completes the proof.
\end{proof}

Proposition~\ref{prop:l1fail} shows that the strong within–group coherence driven by overlapping coalitions destroys the incoherence conditions necessary for $\ell_{1}$ recovery. Hence convex relaxation cannot reliably identify $S^\star$ in the PCSG regime.

\subsection{OMP fails: nonvanishing mis–selection probability}

We next show that OMP cannot be support consistent. The argument exploits the fact that, in each near–duplicate group, correlations between the residual and the true column are statistically indistinguishable from those of its near–duplicates.

Let $\hat w^{\mathrm{OMP}}$ be the output of $k$–step OMP applied to~\eqref{eq:lin-model}. At step $t=1$, the residual is $r^{(0)}=y$. Write
\[
T_j := \langle a_j,y\rangle
= \langle a_j,Aw^\star\rangle
+ \langle a_j,\varepsilon\rangle,
\qquad
T_\ell := \langle a_\ell,y\rangle,
\quad \ell\in G_j.
\]
By the model, $\langle a_j,Aw^\star\rangle - \langle a_\ell,Aw^\star\rangle$ is of order $1-\rho_{\mathrm{in}}$, while the noise terms are centred Gaussians with variance $\sigma^2$. Hence the difference $T_j-T_\ell$ is a Gaussian whose mean vanishes as $\rho_{\mathrm{in}}\to 1$, and whose variance is bounded below. It follows that
\begin{equation}
\label{eq:probflip}
\limsup_{m\to\infty}
\mathbb{P}(T_\ell > T_j) \ge c_1
\end{equation}
for some $c_1>0$ uniform in $j,\ell$.

At step $1$, OMP selects the index with maximal $T_\cdot$. Conditional on the event in~\eqref{eq:probflip}, it selects a spurious $\ell\in G_j$ rather than the true $j$. Since OMP never removes previously selected indices, this error cannot be undone: even if $j$ is captured at a later step, the final support has size $k+1$, so the returned $k$–support necessarily excludes at least one true index.

\begin{proposition}
\label{prop:ompfail}
Under the assumptions of Section~\ref{sec:model},
\[
\mathbb{P}\!\left(\mathrm{supp}(\hat w^{\mathrm{OMP}})=S^\star\right)
\;\le\; 1-c_2
\]
for some constant $c_2>0$ independent of $m$. In particular,
\[
\mathrm{supp}(\hat w^{\mathrm{OMP}})=S^\star
\quad\text{with probability not tending to }1.
\]
\end{proposition}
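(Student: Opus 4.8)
The plan is to turn the two ingredients already assembled---the nonvanishing within-group flip probability \eqref{eq:probflip} and the irreversibility of greedy selection---into a lower bound on the probability of a first-step mis-selection, and then to propagate that single error to the returned support. Define the \emph{mis-selection event}
\[
B := \Big\{\, \text{the index chosen by OMP at step } 1 \text{ lies in } \textstyle\bigcup_{j\in S^\star} G_j \,\Big\},
\]
so that on $B$ the first of the $k$ greedy picks is a spurious near-duplicate. Because $k$-step OMP returns exactly $k$ indices and never deletes a chosen one, on $B$ at least one of the $k$ slots is occupied outside $S^\star$; hence at most $k-1$ true indices can appear and $|\mathrm{supp}(\hat w^{\mathrm{OMP}})\cap S^\star|\le k-1<k=|S^\star|$, so $\mathrm{supp}(\hat w^{\mathrm{OMP}})\neq S^\star$. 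Consequently $\mathbb{P}(\mathrm{supp}(\hat w^{\mathrm{OMP}})=S^\star)\le 1-\mathbb{P}(B)$, and it suffices to show $\liminf_{m\to\infty}\mathbb{P}(B)\ge c_2$ for some $c_2>0$.

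To lower-bound $\mathbb{P}(B)$ I would exploit that at step $1$ OMP selects the column of maximal $|T_i|$, together with near-exchangeability inside a near-duplicate group. Fix a coordinate of largest signal, $j_0\in\arg\max_{j\in S^\star}|w^\star_j|$, and a duplicate $\ell_0\in G_{j_0}$. As $\rho_{\mathrm{in}}\to1$ the columns indexed by $\{j_0\}\cup G_{j_0}$ become asymptotically identical, so the random vector $(T_i)_{i\in\{j_0\}\cup G_{j_0}}$ has nearly equal means ($\approx|w^\star_{j_0}|$) and a covariance tending to the equicorrelated form; it is therefore asymptotically exchangeable. Conditionally on this group realising the global maximum, the within-group maximiser is then asymptotically uniform, so the true index $j_0$ wins with probability at most $1/(1+|G_{j_0}|)+o(1)\le\tfrac12+o(1)$, i.e.\ a spurious index wins with probability at least $\tfrac12-o(1)$. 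Combined with the statement that this group attains the global maximum with probability bounded away from $0$---which I would obtain from the beta-min condition and a Gaussian maximal inequality for $\max_i|\langle a_i,\varepsilon\rangle|$, ensuring that signal-bearing columns dominate the noise field---this yields $\liminf_m\mathbb{P}(B)\ge c_2>0$. When the signal is additionally separated, a shorter variant suffices: show that the two largest correlations are $T_{j_0}$ and $T_{\ell_0}$ with probability tending to $1$ and then invoke \eqref{eq:probflip} directly, giving $\mathbb{P}(B)\ge\mathbb{P}(T_{\ell_0}>T_{j_0})\ge c_1-o(1)$.

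The main obstacle is precisely the upgrade from the pairwise comparison \eqref{eq:probflip} to control of the \emph{global} argmax, since OMP's greedy rule maximises over all $p$ columns rather than within a single pair. Two points require care. First is the field-control step: one must rule out that an off-support column overlapping heavily with the union of true coalitions---or a stray Gaussian spike---attains the maximal correlation. This needs the beta-min scaling to be reconciled with the unit-norm normalisation of the columns and, implicitly, a uniform bound on the correlations between off-support columns and the signal $Aw^\star$; I expect this to be the most delicate and assumption-sensitive part. Second, quantifying the within-group exchangeability requires tracking the covariance of $(T_i)_{i\in\{j_0\}\cup G_{j_0}}$ and its weak coupling to the remaining coordinates through the shared noise $\varepsilon$, which I would handle by a Gaussian comparison argument. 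Once the argmax is pinned to a single group and the within-group law is shown to be asymptotically exchangeable, the constant-probability flip follows and $c_2$ may be taken arbitrarily close to $\tfrac12$.
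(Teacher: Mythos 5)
Your proposal follows the same skeleton as the paper's proof: define the step-1 mis-selection event, observe that $k$-step OMP never deletes an index so a single spurious pick forces $|\mathrm{supp}(\hat w^{\mathrm{OMP}})\cap S^\star|\le k-1<|S^\star|$, and thereby reduce the proposition to a lower bound on the probability of that event. The genuine difference lies in how the lower bound is obtained, and here your version is strictly more careful than the paper's. The paper's proof asserts that \eqref{eq:probflip} already shows step 1 selects some $\ell\in G_j$ with probability at least $c_1$ --- that is, it silently identifies the \emph{pairwise} event $\{T_\ell>T_j\}$ with the event that $\ell$ is the \emph{global} maximiser over all $p$ columns. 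That identification is a non sequitur: on $\{T_\ell>T_j\}$, OMP may still select a different true index $j'$ whose correlation exceeds both, in which case no error occurs at step 1 at all. You correctly isolate this as the main obstacle and supply exactly the two ingredients a rigorous argument needs: (i) field control, pinning the global argmax to a signal-bearing near-duplicate group with probability bounded away from zero, and (ii) within-group asymptotic exchangeability, giving a conditional flip probability near $1/2$. In short, your route is the repaired version of the paper's; the paper buys brevity at the price of a real gap.

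Two caveats on your own plan. First, your suspicion that the field-control step is ``assumption-sensitive'' is well founded: the assumptions of Section~\ref{sec:model} bound $|\langle a_\ell,a_r\rangle|$ only when $\ell$ lies inside a near-duplicate group, so neither the correlations between the signal $Aw^\star$ and off-support columns belonging to no group, nor the correlations among the true columns themselves (beyond the restricted eigenvalue condition), are controlled; making step (i) precise genuinely requires strengthening the stated assumptions, for you just as for the paper. Second, your exchangeability step quietly sidesteps a defect in the paper's derivation of \eqref{eq:probflip}: the paper claims $\mathrm{Var}(T_j-T_\ell)$ is bounded below, but in fact $\mathrm{Var}(T_j-T_\ell)=\sigma^2\|a_j-a_\ell\|_2^2=2\sigma^2\bigl(1-\langle a_j,a_\ell\rangle\bigr)\to 0$ as $\rho_{\mathrm{in}}\to 1$. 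What makes the flip probability nondegenerate is that the mean difference, of order $1-\rho_{\mathrm{in}}$, vanishes faster than this standard deviation, of order $\sqrt{1-\rho_{\mathrm{in}}}$, so the mean-to-spread ratio tends to zero --- which is precisely the comparison your exchangeability argument runs on, and one more reason your formulation is the one that should appear in the paper. Also note a minor overstatement at the end: if the top group wins the global argmax only with probability bounded away from zero (rather than tending to one), then $c_2$ is half that probability, not a quantity arbitrarily close to $\tfrac12$.
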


\begin{proof}
Equation~\eqref{eq:probflip} shows that, with probability at least $c_1$, step $1$ selects $\ell\in G_j$ for some $j\in S^\star$. On this event, OMP includes a spurious index in the support and never removes it. After $k$ steps the support size is $k$, but since a spurious index occupies one position, at least one true index is missing. Therefore $\mathrm{supp}(\hat w^{\mathrm{OMP}})\neq S^\star$ with probability at least $c_1$. Renaming $c_2=c_1$ completes the proof.
\end{proof}

Propositions~\ref{prop:l1fail} and~\ref{prop:ompfail} establish that, in the PCSG regime where near–duplicate coalitions become indistinguishable in the design, both convex relaxation and greedy pursuit admit a nonvanishing probability of returning an incorrect support, regardless of tuning. In the next section we show that, under the same structural assumptions, the SBL maximum a posteriori estimator remains support consistent.

\section{Sparse Bayesian Learning: support consistency in PCSG designs}
\label{sec:sbl}

We now show that, under the same PCSG design assumptions, the SBL maximum a posteriori estimator achieves exact support recovery with probability tending to one. Recall the hierarchical model:
\[
w_j\mid \alpha_j \sim \mathcal{N}(0,\alpha_j^{-1}),\qquad
\alpha_j\sim\mathrm{Gamma}(a,b),
\]
with fixed $a,b>0$ independent of $m$. Let $(\hat w^{\mathrm{SBL}},\hat\alpha)$ be any maximiser of the joint posterior density $p(w,\alpha\mid y)$. Eliminating $\alpha$ yields an equivalent optimisation over $w$:
\begin{equation}
\hat w^{\mathrm{SBL}}
\in\arg\min_{w\in\mathbb{R}^p}
\Bigl\{
\frac{1}{2}\|y-Aw\|_2^{\,2}
+ \sum_{j=1}^p \phi(|w_j|^2)
\Bigr\},
\label{eq:sbl-penalty}
\end{equation}
where $\phi$ is continuous, strictly increasing in $|w_j|^2$, and strictly concave near zero. Thus SBL induces a concave sparsity penalty that heavily discourages small nonzero coefficients.

\subsection{Main result}

\begin{theorem}
\label{thm:sbl}
Under the assumptions of Section~\ref{sec:model}, let $\hat w^{\mathrm{SBL}}$ be any solution of~\eqref{eq:sbl-penalty}. Then
\[
\mathbb{P}\!\left(\mathrm{supp}(\hat w^{\mathrm{SBL}})=S^\star\right)
\;\to\;1
\qquad\text{as }m\to\infty.
\]
In particular, SBL is support consistent in the PCSG regime.
\end{theorem}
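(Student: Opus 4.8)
The plan is to prove support consistency by an oracle comparison: exhibit the restricted (oracle) estimator supported on $S^\star$, show it is accurate and has the correct support, and then argue that it is in fact the global minimizer of the nonconvex objective in~\eqref{eq:sbl-penalty}. Throughout I would condition on a high-probability noise event $\mathcal{E}$ on which $\max_{1\le\ell\le p}|a_\ell^\top\varepsilon|\le\sigma\sqrt{2\log p_m}\,(1+o(1))$ and $\|A_{S^\star}^\top\varepsilon\|_2\lesssim\sigma\sqrt{k_m\log p_m}$; a union bound over the $p_m$ Gaussian marginals together with $k_m\log p_m=o(m)$ gives $\mathbb{P}(\mathcal{E})\to1$.

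First I would define the oracle estimator $\tilde w=\arg\min\{F(w):\operatorname{supp}(w)\subseteq S^\star\}$, where $F$ is the objective in~\eqref{eq:sbl-penalty}. On $S^\star$ the data term is $\kappa$-strongly convex by the restricted eigenvalue condition, and since $\phi$ has bounded curvature away from the origin this restricted problem is well posed. Controlling $A_{S^\star}^\top\varepsilon$ on $\mathcal{E}$ and inverting the restricted normal equations yields $\|\tilde w-w^\star\|_2\lesssim\sqrt{k_m\log p_m/m}$. Combined with the beta-min scaling $\beta_{\min}(m)\sqrt{m/\log p_m}\to\infty$, this forces $\min_{j\in S^\star}|\tilde w_j|\ge\tfrac12\beta_{\min}(m)>0$, so $\operatorname{supp}(\tilde w)=S^\star$ with matching signs.

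The core of the argument is that no global minimizer can activate a spurious near-duplicate, and this is where strict concavity of $\phi$ does the work. Suppose a candidate minimizer $\hat w$ had both a true index $j\in S^\star$ and a near-duplicate $\ell\in G_j$ active. I would compare $\hat w$ to the merged vector $\bar w$ obtained by transferring the mass of coordinate $\ell$ onto $j$ (matching signs) and setting $\bar w_\ell=0$. Because $\|a_j-a_\ell\|_2^2=2(1-\langle a_j,a_\ell\rangle)\le2(1-\rho_{\mathrm{in}})\to0$, the change in the data term is $O(\sqrt{1-\rho_{\mathrm{in}}})$ and vanishes; meanwhile, writing $\Phi(v):=\phi(v^2)$ (normalized so $\Phi(0)=0$), strict concavity gives the strict subadditive gain $\Phi(|\hat w_j|)+\Phi(|\hat w_\ell|)-\Phi(|\hat w_j|+|\hat w_\ell|)>0$, bounded below uniformly once the active magnitudes are bounded below. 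Hence for all large $m$ the penalty saving strictly dominates the fit loss, so $F(\bar w)<F(\hat w)$ and $\hat w$ is not optimal. The same merging logic, together with the between-group bound $|\langle a_\ell,a_r\rangle|\le\rho_{\mathrm{out}}$ and the residual-correlation control $|a_\ell^\top(y-A\tilde w)|\lesssim\sigma\sqrt{\log p_m}+\rho_{\mathrm{out}}\|w^\star-\tilde w\|_1$ on $\mathcal{E}$, rules out activation of duplicates attached to other groups, while a complementary estimate shows that omitting any true index inflates the residual by at least $\kappa\,\beta_{\min}(m)^2$, which no concave-penalized sparse alternative can offset. Collecting these facts, every global minimizer has support exactly $S^\star$, and letting $m\to\infty$ gives $\mathbb{P}(\operatorname{supp}(\hat w^{\mathrm{SBL}})=S^\star)\to1$.

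The main obstacle is the within-group identification step: because $a_\ell\to a_j$, the objective is nearly indifferent between placing mass on the true index $j$ and on its duplicate $\ell$, so the merging argument (which only forbids \emph{both} being active) does not by itself exclude the configuration in which $\ell$ is active while $j$ is not. Resolving this requires exploiting that the true signal lies on $a_j$, so the residual retains a component of size $\asymp\beta_{\min}(m)\sqrt{1-\rho_{\mathrm{in}}}$ along the direction separating $a_j$ from $a_\ell$, and this component must be shown to exceed the noise level $\sigma\sqrt{\log p_m/m}$ — precisely the regime quantified by the beta-min scaling and by $\rho_{\mathrm{in}}$ remaining strictly below one. Establishing this separation uniformly over all groups, while simultaneously controlling the nonconvex landscape enough to certify the oracle point as a \emph{global} rather than merely local minimizer, is the delicate part of the proof.
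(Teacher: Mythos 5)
The obstacle you flag in your final paragraph is not a deferred technicality; it is a genuine gap, and under the paper's stated assumptions it cannot be closed. Your merging argument only excludes supports containing \emph{both} a true index $j\in S^\star$ and a duplicate $\ell\in G_j$; it says nothing about the swap configuration $\tilde w$ obtained from $w^\star$ by moving the mass of coordinate $j$ onto $\ell$ (i.e.\ $\tilde w_j=0$, $\tilde w_\ell=w^\star_j$, all other coordinates unchanged). For this competitor the penalty in~\eqref{eq:sbl-penalty} is exactly unchanged, because the penalty is separable and depends only on the multiset of magnitudes, while the data-fit difference is
\[
L(\tilde w)-L(w^\star)
=\tfrac12\,|w^\star_j|^2\,\|a_j-a_\ell\|_2^2
+ w^\star_j\,\langle a_j-a_\ell,\varepsilon\rangle ,
\]
a Gaussian with mean $\tfrac12|w^\star_j|^2d^2$ and standard deviation $\sigma|w^\star_j|d$, where $d=\|a_j-a_\ell\|_2\le\sqrt{2(1-\rho_{\mathrm{in}})}$. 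This difference is negative with probability bounded away from zero (tending to $1/2$) unless $|w^\star_j|\,d/\sigma\to\infty$, i.e.\ unless $\beta_{\min}(m)\sqrt{1-\rho_{\mathrm{in}}}/\sigma\to\infty$. Section~\ref{sec:model} imposes no such rate: it requires only $\rho_{\mathrm{in}}\to1$ and $\beta_{\min}(m)\sqrt{m/\log p_m}\to\infty$, which are perfectly compatible with, say, $1-\rho_{\mathrm{in}}=e^{-m}$. In that regime $w^\star$ fails to be the global minimiser with probability close to $1/2$, and since the objective is invariant under relabelling $j\leftrightarrow\ell$, a two-point (Le Cam) argument shows no estimator whatsoever can identify the support. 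So the separation you hoped to extract from ``the beta-min scaling and $\rho_{\mathrm{in}}$ remaining strictly below one'' is not available: $\rho_{\mathrm{in}}$ does not remain strictly below one in this regime, it converges to one at an unspecified, possibly overwhelming, rate, and the beta-min condition says nothing about $1-\rho_{\mathrm{in}}$.

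You should also know that the paper's own proof does not resolve this; it falls into exactly the hole you identified. Its Step~3 writes any competitor with a spurious index as $w^\star+u+v$ and analyses only ``sufficiently small $v_\ell$'' by a local argument, then dismisses large $|v_\ell|$ with the claim that, since $P$ is increasing, ``larger $|v_\ell|$ only enlarge the gap.'' That claim is false for the swap competitor: the penalty increase at $\ell$ is exactly cancelled by the penalty decrease at $j$ (carried by $u=-w^\star_j e_j$, which is not small), and the data-fit loss is $O(|w^\star_j|^2(1-\rho_{\mathrm{in}}))$, which can be buried under the noise; hence Step~4's conclusion of global minimality does not follow. Where your proposal is complete (oracle estimator, merging/subadditivity step, residual control) it is sound and in fact more careful than the paper's argument, but both routes founder on the same point: Theorem~\ref{thm:sbl} needs an additional identifiability assumption, e.g.\ $\beta_{\min}(m)\sqrt{1-\rho_{\mathrm{in}}(m)}/\sigma\to\infty$ (with an extra $\sqrt{\log p_m}$ factor for uniformity over groups), before either proof can be finished. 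This is unsurprising in hindsight: Propositions~\ref{prop:l1fail} and~\ref{prop:ompfail} derive the failure of LASSO and OMP precisely from the statistical indistinguishability of $a_j$ and $a_\ell$, and that indistinguishability is estimator-agnostic --- it cannot simultaneously be fatal for those methods and harmless for SBL.
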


The remainder of this section is devoted to a compact proof of Theorem~\ref{thm:sbl}. All probabilities refer to the noise in~\eqref{eq:lin-model}.

\subsection{Proof of Theorem~\ref{thm:sbl}}

Let $L(w)=\frac12\|y-Aw\|_2^2+\sum_{j=1}^p \phi(|w_j|^2)$ denote the SBL objective. Write $L(w)=R(w)+P(w)$ for the data-fit term $R$ and penalty $P$. For $u\in\mathbb{R}^p$ with support contained in $S^\star$, the restricted eigenvalue condition implies
\begin{equation}
\|Au\|_2^2\;\ge\;\kappa\|u\|_2^2.
\label{eq:re}
\end{equation}

\emph{(Step 1: the true support yields a unique local minimiser.)}
Consider vectors $w$ supported on $S^\star$. Setting $u=w-w^\star$, we have
\[
R(w)-R(w^\star)
= \frac12\|A u\|_2^2 + \langle A u,\varepsilon\rangle.
\]
By~\eqref{eq:re} and Cauchy–Schwarz,
\[
R(w)-R(w^\star)
\ge \frac{\kappa}{2}\|u\|_2^2 - \|A u\|_2\|\varepsilon\|_2.
\]
Since $\|\varepsilon\|_2=O_{\mathbb{P}}(\sqrt{m})$ and $\|A u\|_2\le\|u\|_2$, for sufficiently small $\|u\|_2$ the quadratic term dominates the linear term with probability tending to one, so $w^\star$ is a strict local minimiser of $R$ on $\{w:\mathrm{supp}(w)\subseteq S^\star\}$ with probability $1-o(1)$. As $P$ is continuous and nonnegative, the same holds for $L$.

\emph{(Step 2: deviations on $S^\star$ have strictly higher objective.)}
By the beta-min condition, $|w^\star_j|\ge\beta_{\min}(m)$ for $j\in S^\star$, and $\beta_{\min}(m)\sqrt{m/\log p_m}\to\infty$. For any $w$ supported on $S^\star$ distinct from $w^\star$,
\[
L(w)-L(w^\star)
\ge \frac{\kappa}{4}\|w-w^\star\|_2^2
\]
with probability $1-o(1)$, using the same argument as above and the fact that $\phi$ is increasing. Hence $w^\star$ minimises $L$ over $\{w:\mathrm{supp}(w)\subseteq S^\star\}$ with probability tending to one.

\emph{(Step 3: any spurious index increases the objective.)}
Let $w$ be any vector whose support contains an index $\ell\in S^{\star c}$. Write $w=w^\star+u+v$ where $u$ is supported on $S^\star$ and $v$ on $\{\ell\}$. For sufficiently small $v_\ell$,
\[
L(w)-L(w^\star)
= \frac12\|A(u+v)\|_2^2 + \langle A(u+v),\varepsilon\rangle
  + P(w)-P(w^\star).
\]
The data-fit part is nonnegative up to a $o_{\mathbb{P}}(1)$ term, by the same local argument. For the penalty, strict concavity at the origin implies
\[
\phi(|v_\ell|^2) > \phi(0) + \phi'(0)\,|v_\ell|^2,
\]
so $P(w)-P(w^\star)\ge c_0|v_\ell|$ for some $c_0>0$. Therefore $L(w)>L(w^\star)$ for all sufficiently small $v_\ell$, with probability $1-o(1)$. Since $P$ is increasing, larger $|v_\ell|$ only enlarge the gap.

\emph{(Step 4: global minimality on $\mathbb{R}^p$.)}
Combining Steps~1--3, every $w$ with either (i) support strictly contained in $S^\star$ but $w\neq w^\star$, or (ii) some spurious index, satisfies $L(w)>L(w^\star)$ with probability tending to one. Thus $w^\star$ is the unique global minimiser of $L$ with probability $1-o(1)$, and the MAP estimator $\hat w^{\mathrm{SBL}}$ satisfies $\mathrm{supp}(\hat w^{\mathrm{SBL}})=S^\star$ with the same probability. This proves the theorem.

\section{Conclusion}
\label{sec:conclusion}

We have studied Probabilistic Coalition Structure Generation through the lens of sparse linear modelling, representing coalition structures by a sparse coefficient vector and observed valuations via a linear--Gaussian model. In this PCSG-inspired regime, coalition incidence matrices naturally exhibit near-duplicate, highly overlapping coalitions, leading to extreme column coherence. We showed that this structural property destroys the standard conditions under which classical sparse recovery methods succeed. In particular, the irrepresentable condition underlying $\ell_{1}$ support recovery fails as within-group correlations approach one, and $k$--step Orthogonal Matching Pursuit admits a nonvanishing probability of early mis-selection that cannot be corrected by subsequent iterations.

Against this negative backdrop, we proved that Sparse Bayesian Learning with a simple Gaussian--Gamma hierarchy remains support consistent. The concave, adaptive penalty induced by SBL penalises configurations that split signal across near-duplicate coalitions and instead favours solutions with a single strong coefficient per group, enabling exact recovery of the true coalition structure with high probability. These results yield a clean theoretical separation between convex relaxations, greedy pursuit, and Bayesian sparse estimation in a regime that is both algorithmically challenging and motivated by coalition formation under uncertainty. Future work includes extending the analysis to non-Gaussian noise, richer valuation models, and empirical evaluation on synthetic and real PCSG instances.

%========================
% References
%========================
\bibliographystyle{elsarticle-num}
\bibliography{refs}  % refs.bib

%=========================================================
\end{document}